\newcommand{\eat}[1]{}
\newcommand{\sort}[1]{\text{\sc #1}}
\newcommand{\vect}[1]{\bar{#1}}
\newcommand{\Poss}{\ensuremath{\text{\em Poss}}}
\newcommand{\Holds}{\ensuremath{\text{\em Holds}}}
\newcommand{\Do}{\ensuremath{\text{\em Do}}}
\newcommand{\DomAx}{\ensuremath{{\cal D}}}
\newcommand{\DomAxs}{\ensuremath{{\cal D}}\ }
\newcommand{\Dom}{\ensuremath{\DomAx_{\text{dc}}}}
\newcommand{\Pre}{\ensuremath{\DomAx_{\text{Poss}}}}
\newcommand{\Init}{\ensuremath{\DomAx_{\text{Init}}}}
\newcommand{\Aux}{\ensuremath{\DomAx_{\text{aux}}}}
\newcommand{\Auxs}{\ensuremath{\DomAx_{\text{aux}}}\ }
\newcommand{\EffAx}{\ensuremath{\DomAx_{\text{Effects}}}}
\newcommand{\Eff}{\ensuremath{\Upsilon}}
\newcommand{\Program}{{\Pi}}
\newcommand{\Programs}{\ensuremath{\Pi}\ }
\newcommand{\Query}{{\Gamma}} 
\newcommand{\Querys}{\ensuremath{\Gamma}\ } 
\newcommand{\imp}{\supset}
\newcommand{\lpmi}{\subset}
\renewcommand{\iff}{\equiv}
\newcommand{\True}{\mbox{\em true\/}}
\newcommand{\At}{\mbox{\em At\/}}
\newcommand{\Agent}{\mbox{\em Agent\/}}
\newcommand{\Gold}{\mbox{\em Gold\/}}
\newcommand{\Go}{\mbox{\em Go\/}}
\newcommand{\Explore}{\mbox{\em Explore\/}}
\newcommand{\Select}{\mbox{\em Select\/}}
\newcommand{\Rmc}{\ensuremath{\mathcal{R}\xspace}}
\def\bbox{\vrule width 5 true pt height 5 true pt depth 0 pt}
\newtheorem{definition}{Definition}
\newtheorem{tdef}{Definition}
\newtheorem{ttheo}[tdef]{Theorem}
\newtheorem{tprop}[tdef]{Proposition}
\newtheorem{tlem}[tdef]{Lemma}
\newtheorem{texam}{Example}
\newtheorem{texamc}{Example}
\newenvironment{proposition}{\begin{tprop}\ \it}{\end{tprop}}
\newenvironment{example}{\begin{texam}\ \em}{\ \hfill\bbox\end{texam}}
\newenvironment{mylistd}{\begin{list}{$\bullet$}{%
               \itemsep 0.2em%
               \labelsep 0.2em%
               \topsep   0.2em%
               \itemindent -0.8em%
               \leftmargin 0.8em}}%
               {\end{list}}
\newenvironment{mylisti}{\begin{list}{--}{%
               \itemsep 0.2em%
               \topsep   0.2em%
               \labelsep  0.4em%
               \leftmargin 1.1em}}%
               {\end{list}}
\newcommand\continuedexample{%
  \def\thetexam{1\ \contname}%
}
\newcommand\newexample{%
  \def\thetexam{2\ (Disjunctions and Substitutions in ALPprolog)}%
}
\newcommand\newproposition{%
  \def\thetdef{1\ (Soundness of ALPprolog)}%
}
\title{ALPprolog --- A New Logic Programming Method for Dynamic Domains}
\author[Conrad Drescher and Michael Thielscher]{CONRAD DRESCHER \\
  Computing Laboratory,  University of Oxford, UK\\
  \email{Conrad.Drescher@comlab.ox.ac.uk}
  \and
  MICHAEL THIELSCHER\\
  School of Computer Science and Engineering, The University of New South Wales, Australia\\
  \email{mit@cse.unsw.edu.au}
}
\begin{document}

\maketitle

\noindent
{\bf Note:} This article has been published in \emph{Theory and Practice of Logic Programming},volume 11, issue 4-5, pp. 451-468, \copyright Cambridge University Press.
\enlargethispage{6ex}


\begin{abstract}
Logic programming is a powerful paradigm for programming autonomous agents in dynamic domains,
as witnessed by languages such as Golog and Flux.
In this work we present ALPprolog, an expressive, yet efficient, 
logic programming language for the online control of agents 
that have to reason about incomplete information and sensing actions.
\end{abstract}
\begin{keywords}
  reasoning about actions, agent logic programs
\end{keywords}

\section{Introduction}

Programming autonomous agents that behave intelligently is one of the key challenges of Artificial Intelligence.
Because of its declarative nature, and high level of abstraction, logic programming is a natural choice for this task.
This is witnessed by e.g.\ the two major exponents of agent programming languages that are based on classical logic 
programming, namely Golog \cite{levesq:golog} and Flux \cite{A:16}.

Both these languages combine a language for specifying the agent's behaviour with an axiomatic theory that describes 
the agent's environment.
In the case of Golog the strategy language is procedural in nature (though implemented in Prolog), and the action 
theory is the classical Situation Calculus \cite{MH69} in
Reiter's version \cite{Reiter01}.
For Flux the strategy language is full classical logic programming, and the action theory is the more recent 
Fluent Calculus \cite{A:11}.

In a recent work \cite{alp} we have developed Agent Logic Programs (ALPs), a new declarative strategy language
that is based upon a proof calculus in the style of classical SLD-resolution.
Contrary to Golog and Flux the ALP framework is parametric in the action theory\/:
any background theory that allows to infer when an action is applicable, and what the effects of the action are, can be used.
Exploiting this generality we have recently \cite{thielscher:KR10a} been able to give a semantics for the 
BDI-style language AgentSpeak \cite{bordin:progra}.
Another distinctive feature of the theoretical framework is the elegant handling of incomplete information for offline planning
via {\em disjunctive substitutions}.
By default, ALPs are combined with our new Unifying Action Calculus (UAC) \cite{thielscher:AIJ11}
that encompasses the major logical action calculi, including both the Situation Calculus and the Fluent Calculus, as well as many planning domain description languages. The ALP formalism stays entirely within classical logic.

The implementation of any fragment of the ALPprolog framework consists of (1) an implementation of the proof calculus,
and (2) an action theory reasoner. Existing mature Prolog technology can be used out of the box for (1) unless disjunctive substitutions enter the picture.
For (2) we can also exploit existing technology\/: E.g.\ Golog implements a fragment of the Situation Calculus, and Flux handles a fragment of the Fluent Calculus.
In \cite{update} the implementation of a Description Logic-based fragment of the Fluent Calculus is described.

In this work we present ALPprolog, 
where the underlying action theory is an essentially propositional version of the Fluent Calculus in the 
UAC that includes a simple, yet powerful model of sensing.
ALPprolog is intended for the online control of agents, where actions are immediately executed.
This starkly contrasts with offline reasoning, where agents may make assumptions to see where these are leading.
ALPprolog was developed specifically for the efficient handling of large ground state representations,
something that we consider to be practically useful.
To this end ALPprolog combines strong-points of Golog and Flux\/:

\begin{itemize}
\item From Golog it takes the representation of the agent's state knowledge in full propositional logic 
  via prime implicates; and
\item From Flux it takes the principle of progression\/: The agent's state knowledge is updated upon the 
  execution of an action.
  In standard Golog the agent's initial state knowledge is never 
  updated.\footnote{But there is a version of Golog where the initial state is periodically updated \cite{sardin:wumpus}.}
  Instead, queries referring to later time-points are rewritten until they can be evaluated against the 
  initial state knowledge,
  something which becomes a hindrance to good performance as the sequence of executed actions grows.
\end{itemize}

We emphasise that ALPprolog is an agent programming language in the spirit of classical logic programming in Prolog\/:
The straightforward operational semantics provides the programmer with a powerful means to actively determine the 
sequence of actions that an agent executes.
ALPprolog\footnote{The name is a play on ALPs, propositional logic, and the implementation in plain Prolog.}
can be obtained at \,\url{alpprolog.sourceforge.net}.

The remainder of this paper is organised as follows\/:
In Section~\ref{sec:alp} we recall the basics of the ALP framework, and
in Section~\ref{sec:alpprolog} we introduce ALPprolog.
We evaluate the performance of ALPprolog in Section~\ref{sec:eval}, and 
conclude in Section~\ref{sec:conc}.

\section{ALPs in a Nutshell}
\label{sec:alp}

The purpose of agent logic programs is to provide high-level control
programs for agents using a combination of declarative programming
with reasoning about actions. The syntax of these programs is
kept very simple\/: standard (definite) logic programs (see e.g.\ \cite{Lloy87}) are augmented
with just two
special predicates, one --- written \,\verb:do(:$\!\alpha\!$\verb:): --- to
denote the execution of an action by the agent, and one --- written
\,\verb:?(:$\!\varphi\!$\verb:): --- to verify properties against (the
agent's model of)
the state of its environment. This model, and how it is affected by
actions, is defined in a separate action theory. This allows for a
clear separation between the agent's strategic behaviour (given by the
agent logic program itself) and the underlying theory about the
agent's actions and their effects. Prior to giving the formal
definition, let us illustrate the idea by an example agent
logic program.

\newcounter{EX}
\setcounter{EX}{\value{texam}}
\begin{example}
\label{ex:alp}
Consider an agent whose task is to find gold in a maze. For the sake
of simplicity, the states of
the environment shall be described by a single {\em fluent\/} (i.e.,
state property)\/: $\At(u,x)$ to
denote that~$u\in\{\Agent,\Gold\}$ is at location~$x$. The agent can
perform the action $\Go(y)$ of going to location~$y$, which is
possible if~$y$ is adjacent to
the current
location of the agent. 
The following ALP describes a simple search strategy 
via a given list of locations (choice points) that
the agent may visit, and an ordered collection of backtracking
points. We follow the Prolog convention of writing variables with a leading uppercase letter.

{\small
\begin{verbatim}
 explore(Choicepoints,Backtrack) :-         % finished, if
    ?(at(agent,X)), ?(at(gold,X)).          % gold is found

 explore(Choicepoints,Backtrack) :-           
    ?(at(agent,X)),
    select(Y,Choicepoints,NewChoicepoints), % choose a direction
    do(go(Y)),                              % go in this direction
    explore(NewChoicepoints,[X|Backtrack]). % store the choice

 explore(Choicepoints,[X|Backtrack]) :-     % go back one step
    do(go(X)),
    explore(Choicepoints,Backtrack).

 select(X,[X|Xs],Xs).
 select(X,[Y|Xs],[Y|Ys]) :- select(X,Xs,Ys).
\end{verbatim}}
\noindent
Suppose we are given 
a
list of choice points \,\verb:C:\,,
then the query \,\verb#:- explore(C,[])#\, lets the agent
systematically search for gold from its current location\/: the first
clause describes the base case where the agent is successful; the
second clause lets the agent select a new location from the list of
choice points and go to this location (the declarative semantics and
proof theory for \,\verb:do(:$\!\alpha\!$\verb:):\, will require that the
action is possible at the time of execution); and the
third clause sends the agent back using the latest backtracking
point.
\end{example}

The example illustrates two distinct features of ALPs\/:
(1) The agent strategy is defined by
a logic program that may use arbitrary function and predicate symbols
in addition to the signature of the underlying action theory.
(2) The update of
the agent's belief according to the effects of its actions is not part
of the strategy. 
Formally, ALPs are defined as follows.

\begin{definition}
Consider an action theory signature $\Sigma$ that includes the pre-defined sorts
\sort{action} and \sort{fluent}, along with a logic program
signature~$\Pi\!$.
\begin{mylistd}
\item {\em Terms\/} are from $\Sigma\cup\Pi\!$.
\item If \,\verb:p:\, is an $n\!$-ary relation symbol from $\Pi$ and
  \,{\tt t}$\!_1,...,\!${\tt t}$\!_n\!$ are terms, then 
  \,{\tt p(t}$\!_1,...,\!${\tt t}$\!_n)\!$ is a {\em program atom\/}.
\item \verb:do(:$\!\alpha\!$\verb:):\, is a {\em program atom\/} if
  $\alpha$ is an \sort{action} term in~$\Sigma\!$.
\item \verb:?(:$\!\varphi\!$\verb:):\, is a {\em program atom\/} if
  $\varphi$ is a {\em state property\/} in~$\Sigma\!$, that is, a
  formula (represented as a term) based on the \sort{fluent}s
  in~$\Sigma\!$.
\item Clauses, programs, and queries are then defined as usual for
  definite logic programs, with the restriction that the two special
  atoms cannot occur in the head of a clause.\ \hfill\bbox
\end{mylistd}
\end{definition}


\subsection{Declarative Semantics\/: Program + Action Theory}

The semantics of an ALP is given in two steps. First,
the program needs to be ``temporalised,'' making explicit
the state change that is implicit in the
use of the two special predicates,
\,\verb:do(:$\!\alpha\!$\verb:):\, and
\,\verb:?(:$\!\varphi\!$\verb:):.
Second, the resulting program is combined with an action
theory as the basis for evaluating these two special predicates.
The semantics is then the classical logical semantics of the expanded program together with the action theory.

Time is incorporated into a program through macro-expansion\/:
two arguments of sort \sort{time}\footnote{Which specific concept of
  time is being
  used depends on how the sort \sort{time} is defined in the
  underlying action theory, which may be
  branching  (as, e.g., in the Situation Calculus) or
  linear (as, e.g., in the Event Calculus).}
are added to every regular
program atom~$p(\vect x)\!$, and then $p(\vect x,s_1,s_2)$ is understood as
restricting the truth of the atom to the temporal interval between
(and including) $s_1$ and $s_2\!$.
The two special atoms receive special
treatment\/: \,\verb:?(:$\!\varphi\!$\verb:):\, is re-written to
$\Holds(\varphi,s)\!$, with the intended meaning that~$\varphi$
is true at~$s\!$; and \,\verb:do(:$\!\alpha\!$\verb:):\,
is mapped onto $\Poss(\alpha,s_1,s_2)\!$, meaning that action~$\alpha$ can
be executed at~$s_1$ and that its execution ends in~$s_2\!$. 
The formal definition is as follows.

\begin{definition} \label{d:expansion}
For a clause {\tt H\,:-\,B$\!_1\!$,...,B$\!_n\!$}
($\!n\geq0\!$), let $s_1, \ldots, s_{n+1}$ be variables of sort
\sort{time}.
\begin{mylistd}
\item For $i=1,\ldots,n\!$, if \,{\tt B}$\!_i$ is of the form
  \begin{mylisti}
  \item {\tt p(t$\!_1\!$,...,t$\!_m\!$)}, expand to $P(t_1,\ldots,t_m,
  s_i, s_{i+1})$.
  \item {\tt do(}$\!\alpha\!${\tt )}, expand to $\Poss(\alpha, s_i, s_{i+1})$.
  \item {\tt ?(}$\!\varphi\!${\tt )}, expand to $\Holds(\varphi,s_i) \land s_{i+1} = s_i$.
  \end{mylisti}
\item The head atom \,{\tt H\,=\,p(t$\!_1\!$,...,t$\!_m\!$)}\, is expanded
  to $P(t_1,\ldots, t_m, s_1, s_{n+1})\!$.
\item The resulting clauses are understood as universally quantified
  implications.
\end{mylistd}
\noindent
Queries are expanded exactly like clause bodies, except that 
\begin{mylistd}
\item a special constant $S_0$ --- denoting the earliest time-point in the
  underlying action theory --- takes the place of $s_1$;
\item the resulting conjunction is existentially quantified.\ \hfill\bbox
\end{mylistd}
\end{definition}

\continuedexample
\begin{example}
The example program of the preceding section is
understood as the following axioms, which for notational convenience we have simplified in that all equations between {\sc time} variables have been
applied and then omitted.
\[ \begin{array}{rcl}
  (\forall)\Explore(c,b,s_1,s_1)& \lpmi & \Holds(\At(\Agent,x),s) \wedge \Holds(\At(\Gold,x),s_1) \\
  (\forall)\Explore(c,b,s_1,s_4) & \lpmi & \Holds(\At(\Agent,x),s_1) \wedge \Select(y,c,c',s_1,s_2)\wedge \\
  && \Poss(\Go(y),s_2,s_3)\wedge \Explore(c',[x|b],s_3,s_4) \\
  (\forall)\Explore(c,[x|b],s_1,s_3) &\lpmi & \Poss(\Go(x),s_1,s_2)\wedge \Explore(c,b,s_2,s_3) \\ 
  \multicolumn{3}{l}{(\forall)\Select(x,[x|x'],x',s_1,s_1)\lpmi\True} \\
  \multicolumn{3}{l}{(\forall)\Select(x,[y|x'],[y|y'],s_1,s_2)\lpmi\Select(x,x',y',s_1,s_2)}
\end{array} \]
The resulting theory constitutes a purely logical axiomatisation of the
agent's strategy, which provides the basis for logical entailment.
For instance, macro-expanding the
query \,\verb#:- explore(C,[])# from the above example results in the temporalised logical formula 
$(\exists s)\,\Explore(C,[\,],S_0,s)$.
If this formula follows from
the axioms above, then that means that the strategy can be
successfully executed, starting at~$S_0$, for the given list of
choice points~$C$. Whether this is actually the case of course
depends on the additional action theory that is needed to evaluate
the special atoms $\Holds$ and $\Poss$ in a macro-expanded program.
\end{example}

Macro-expansion provides the first part of the declarative semantics
of an agent logic program; the second part is given by an action
theory in form of a logical axiomatisation of actions and their
effects. The overall declarative semantics of agent logic programs is
given by the axiomatisation consisting of the action theory and the expanded program.

Let us next introduce the fragment of the UAC corresponding to the Fluent Calculus.
The UAC that is used to axiomatise the action theory is based on many-sorted first order logic with 
equality and the four sorts \sort{time}, 
\sort{fluent}, \sort{object}, and \sort{action}. By convention variable 
symbols $s\!$, $f\!$, $x\!$, and $a$ are used for terms of sort \sort{time}, \sort{fluent}, \sort{object}, 
and \sort{action}, respectively.
Fluents are reified, and
the standard predicate 
$\Holds:\sort{fluent}\times\sort{time}$ 
indicates whether a fluent is true at a particular time.
The predicate 
$\Poss(a,s_1,s_2)$ 
means that action~$\alpha$ can be executed at~$s_1$ and that its execution ends in~$s_2$.
The number of function symbols into sorts \sort{fluent} and \sort{action} is finite.

\begin{definition}[Action Theory Formula Types]
\label{def:uac}
We stipulate that the following formula types are used by action theories\/:
\begin{itemize}
\item State formulas express what is true at particular times\/:
  A {\em state formula $\Phi[\vect s\,]$ in~$\vect s$\/}
  is a first-order formula with free variables $\vect s$ where
  \begin{itemize}
  \item for each occurrence of $\Holds(f,s)$ we have $s\in\vect s$;
  \item predicate $\Poss$ does not occur.
  \end{itemize}
  A state formula is {\em pure} if it does not mention predicates other than $\Holds$.
\item A {\em state property} $\phi$ is an expression built from the standard logical 
  connectives and terms $F(\vect x)$ of sort \sort{fluent}.
  With a slight abuse of notation, by $\Holds(\phi,s)$ we denote the state formula obtained from state property $\phi$ by 
  replacing every occurrence of a fluent $f$ by $\Holds(f,s)$. In an expanded program $\Program$ we always treat $\Holds(\phi,s)$ as atomic.
  State properties are used by agent logic programs in \verb:?(Phi): atoms.
\item The {\em initial state axiom} is a state formula $\phi(S_0)$ in $S_0$, where $S_0$ denotes the initial situation.
\item An {\em action precondition axiom\/} is of the form
  \[(\forall)\Poss(A(\vect x),s_1,s_2)\,\equiv\,\pi_A[s_1] \land s_2 = \Do(A(\vect x),s_1),\]
  where $\pi_A[s_1]$ is a state formula in~$s_1$ with free variables among $s_1,\vect x$.
  This axiom illustrates how different actions lead to different situation terms $\Do(A(\vect x),s_1)$.
  Situations constitute the sort {\sc time} in the Fluent Calculus and provide a branching time
  structure.
\item {\em Effect axioms} are of the form
  \begin{flalign*}
      &\Poss(A(\vec x),s_1,s_2)\imp\\
      &\quad \bigvee_k (\exists \vec y_k)(\Phi_k[s_1]\wedge(\forall f)[(\bigvee_i f=f_{ki}\vee(\Holds(f,s_1)\wedge \bigwedge_j f\neq g_{kj}))\\
      &\quad \quad \quad \quad \quad \quad \quad \quad \quad \quad \quad \quad \quad \quad \quad \quad \quad \quad \quad \quad \quad \quad \quad \iff\Holds(f,s_2)]).
  \end{flalign*}
  Such an effect axiom has $k$ different cases that can apply --- these are identified by the case selection formulas 
  $\Phi_k[s_1]$ which
  are state formulas in~$s_1$ with free variables among $s_1,\vec x,\vec y_i$.
  The $f_{ki}$ (and $g_{kj}$, respectively) are fluent terms with variables among
    $\vec x,\vec y_k$ and describe the positive (or, respectively, negative) 
  effects of the action, given that case $k$ applies.
\item {\em Domain constraints} are universally quantified state formulas $(\forall s)\delta[s]$ in~$s$.
\item {\em Auxiliary axioms} are domain-dependent, but time-independent, additional axioms such as e.g.\ an axiomatisation 
  of finite domain constraints.
\end{itemize}
\end{definition}

An action theory \DomAx\ is given by an initial state axiom $\Init$, finite sets $\Pre$ and $\EffAx$ of precondition and
effect axioms. Moreover domain constraints $\Dom$ and auxiliary axioms $\Aux$ may be included.
For illustration, the following is a background
axiomatisation for our example scenario as
a basic Fluent Calculus theory in the UAC.

\continuedexample
\begin{example}
Our example program can be supported by the following domain theory.
\begin{mylistd}
  \item Initial state axiom
\[
  \Holds(\At(\Agent,1),S_0)\wedge\Holds(\At(\Gold,4),S_0)
\]
  \item Precondition axiom
\[ \begin{array}{ll}
  \Poss(\Go(y),s_1,s_2)\iff &(\exists x)(\Holds(\At(\Agent,x),s_1)\wedge (y=x+1\vee y=x-1)) \\
  &\wedge\, s_2=\Do(\Go(y),s_1)
\end{array} \]
  \item Effect axiom
    \begin{flalign*}
      &\Poss(\Go(y),s_1,s_2) \imp\\
      & \quad (\exists x) ( \Holds(\At(\Agent,x),s_1)\ \land\\
      & \quad [ (\forall f)  \Holds(f,s_2 ) \equiv ( \Holds(f,s_1) \lor f = \At(\Agent,y) ) \land f \neq \At(\Agent,x) ] ).
    \end{flalign*}
\end{mylistd}
Given this (admittedly very simple, for the sake of illustration)
specification of the background action theory, the axiomatisation of
the agent's strategy from above entails, for example,
$(\exists s)\,\Explore([2,3,4,5],[\,],S_0,s)$.
This can be shown as follows.
First, observe that the background
theory entails
\[
  \Holds(\At(\Agent,4),S)\wedge\Holds(\At(\Gold,4),S),
\]
where $S$ denotes the situation term $\Do(\Go(4),\Do(\Go(3),\Do(\Go(2),S_0)))$.
It follows that
$\Explore([5],[3,2,1],S,S)$ according to the first clause of
our example ALP. Consider, now, the situation
$S'=\Do(\Go(3),\Do(\Go(2),S_0))$, then action theory and strategy
together imply
\[ 
  \Holds(\At(\Agent,3),S')\wedge\Select(4,[4,5],[5],S',S')\wedge \Poss(\Go(4),S',S)
\]
By using this in turn, along with $\Explore([5],[3,2,1],S,S)$ from above, according to the second program clause
we obtain $\Explore([4,5],[2,1],S',S)$.
Continuing this line of reasoning, it can be shown that
\[ \begin{array}{ll}
  & \Explore([3,4,5],[1],\Do(\Go(2),S_0),S) \\
  \mbox{and hence,} & \Explore([2,3,4,5],[\,],S_0,S)
\end{array} \]
This proves the claim that
$(\exists s)\,\Explore([2,3,4,5],[\,],S_0,s)$.
On the other hand e.g.\ the query $(\exists s)\,\Explore([2,4],[\,],S_0,s)$
is {\em not\/}
entailed under the given background theory\/:
Without
location~$3$ among the choice points, the strategy does
not allow the agent to reach the only location that is known to house gold.
\end{example}

\subsection{Operational Semantics\/: Proof Calculi}

We have developed two sound and complete proof calculi for ALPs that both 
assume the existence of a suitable reasoner for the underlying action theory \cite{alp}.

The first proof calculus is plain SLD-resolution, only that \Holds- and \Poss-atoms are evaluated against the action
theory. This calculus is sound and complete if the underlying action theory has the witness property\/:
That is, whenever $\DomAx \vDash (\exists x) \phi(x)$ then there is a substitution $\theta$ such that
$\DomAx \vDash (\forall) \phi(x)\theta$. Note that in general action theories may violate the witness property,
as they may include disjunctive or purely existential information; consider e.g.\ the case
$\Holds(\At(\Gold,4),S_0) \lor \Holds(\At(\Gold,5),S_0)$, where the exact location of the gold is unknown.

Hence the second proof calculus, intended for the general case, resorts to constraint logic programming, and the notion of a
disjunctive substitution\/: Still assuming that the gold is located at one of two locations 
the query $\DomAx \vDash (\exists x)\Holds(\At(\Gold,x)$
can now be answered positively via the disjunctive substitution $x \rightarrow 4 \lor x \rightarrow 5$.
Disjunctive substitution together with the respective principle of reasoning by cases are a powerful
means for inferring conditional plans.

For the online control of agents, however, assuming a particular case
is unsafe. But if we use the plain SLD-resolution-based ALP proof calculus on top of action theories that lack the witness
property we obtain a nice characterisation of cautious behaviour in a world of unknowns (albeit at the cost of 
sacrificing logical completeness). For ALPprolog this is the setting that we use.

In both proof calculi we adopt the "leftmost" computation rule familiar from Prolog.
This has many advantages\/:
First, it simplifies the implementation, as this can be based on existing mature Prolog technology.
Second, state properties can always be evaluated against a description of the "current" state.
Last, but not least, this ensures that actions are executed in the order intended by the programmer ---
this is of no small importance for the online control of agents.

\section{ALPprolog}
\label{sec:alpprolog}

We next present ALPprolog --- 
an implementation of the ALP framework atop of action theories in a version of the Fluent Calculus that 

\begin{itemize}
\item uses (a notational variant of) propositional logic for describing state properties;
\item is restricted to actions with ground deterministic effects; and
\item includes sensing actions.
\end{itemize}

The intended application domain for ALPprolog is the online control of agents in dynamic domains with incomplete information.

\subsection{ALPprolog Programs}

An ALPprolog program is an ALP that respects the following restrictions on the \verb#?(Phi)# atoms in the program\/:

\begin{itemize}
\item All occurrences of non-fluent expressions in $\phi$ are positive.
\item So called sense fluents $S(\vec x)$ that represent the interface to a sensor 
  may only occur in the form \verb#?(s(X))#. Sense fluents are formally introduced below.
\end{itemize}

Because ALPprolog programs are meant for online execution the programmer must ensure that no 
backtracking over action executions occurs, by inserting cuts after all action occurrences. 
Observe that this applies to sensing actions, too.
It is readily checked that --- after the insertion of cuts --- the ALP from example~\ref{ex:alp} 
satisfies all of the above conditions.

\subsection{Propositional Fluent Calculus}

In this section we introduce the announced propositional fragment of the Fluent Calculus.
The discussion of sensing is deferred until section~\ref{ssec:sens}.

For ease of modelling we admit finitely many ground terms for fluents and objects,
instead of working directly with propositional letters.
An action domain \DomAxs is then made propositional by
including the respective domain closure axioms.
For actions, objects, and fluents unique name axioms are included ---
hence we can avoid equality reasoning.

The basic building block of both the propositional Fluent Calculus and ALPprolog are the so-called prime implicates 
of a state formula $\phi(s)$\/:

\begin{definition}[Prime Implicate]
A clause $\psi$ is a {\em prime implicate\/} of $\phi$ iff it
is entailed by $\phi\!$,
is not a tautology, and
is not entailed by another prime implicate.
\end{definition}

The prime implicates of a formula are free from redundancy --- all tautologies and implied clauses have been deleted.
For any state formula an equivalent {\em prime state formula} can be obtained by first transforming the state formula 
into a set of clauses, and by then closing this set 
under resolution, and the deletion of subsumed clauses and tautologies. 

Prime state formulas have the following nice property\/: 
Let $\phi$ be a prime state formula, and let $\psi$ be some clause (not mentioning auxiliary predicates); 
then $\psi$ is entailed by $\phi$ if and only if it is subsumed by some prime implicate in $\phi$,
a fact that has already been exploited for Golog \cite{Reiter01,reiter:knowle}.
This property will allow us to reduce reasoning about state knowledge in ALPprolog to simple list look-up operations.

Formally the propositional version of the Fluent Calculus is defined as follows.

\begin{definition}[Propositional Fluent Calculus Domain]
We stipulate that the following properties hold in propositional Fluent Calculus domains\/:

\begin{itemize}
\item The initial state \Init\ is specified by a ground prime state formula.
\item The state formulas $\phi(s_1)$ in action preconditions $\Poss(a,s_1,s_2) \equiv \phi(s_1) \land s_2 = \Do(a,s_1)$
  are prime state formulas.
\item The effect axioms
  are of the form
  \begin{flalign*}
      &\Poss(A(\vec x),s_1,s_2)\imp\\
      &\quad \bigvee_k (\Phi_k[s_1]\wedge(\forall f)[(\bigvee_i f=f_{ki}\vee(\Holds(f,s_1)\wedge \bigwedge_j f\neq g_{kj}))\\
      &\quad \quad \quad \quad \quad \quad \quad \quad \quad \quad \quad \quad \quad \quad \quad \quad \quad \quad \quad \quad \quad \quad \quad \iff\Holds(f,s_2)]),
    \end{flalign*}
  where each $\Phi_k[s_1]$ is a prime state formula. This implies that existentially quantified variables 
  that may occur in case selection formulas (cf.\ definition~\ref{def:uac})
  have been eliminated by introducing additional cases.
\item Only so-called {\em modular} domain constraints \cite{herzig:domain} may be included.
  Very roughly, domain constraints are modular if they can be compiled into the agent's initial state knowledge,
  and the effect axioms ensure that updated states also respect the domain constraints.
  In the Fluent Calculus this holds if the following two conditions are met \cite{thielscher:AIJ11}\/:
  Condition~(\ref{e:i1}),
  says that for a state that is consistent with the
  domain constraints and in which an action~$A(\vec x)$ is applicable, the
  condition~$\Phi_i[S]$ for at least one case~$i$ in the effect axiom for~$A$ holds. 
  Condition (\ref{e:i2}) requires that any possible
  update leads to a state that satisfies the domain constraints.
  Formally, let $S,T$ be constants of sort~\sort{Time}.
  $\Dom$ the domain constraints,
  $\Pre$ the precondition axioms, and $\EffAx$ the effect axioms.
  The following must hold for every action $A(\vec x)$\/:
  There exists $i=1,\ldots,n$ such that
  \begin{equation} \label{e:i1}
    \models\Dom[S]\wedge\pi_A[S]\wedge(\exists\vec y_i)\Phi_i[S],
  \end{equation}
  and for every such $i$,
  \begin{equation} \label{e:i2}
    \models\Dom[S]\wedge\pi_A[S]\wedge\Eff_i[S,T]\imp\Dom[T].
  \end{equation}
  Non-modular, fully general domain constraints greatly complicate reasoning.
\item Auxiliary time-independent axioms may be included if they can faithfully be represented in the Prolog dialect 
  underlying the implementation.
  This deliberately sloppy condition is intended to allow the programmer to use her favourite Prolog library.
  However, we stipulate that auxiliary predicates occur only positively outside of \Auxs in the action domain \DomAx in
  order to ensure that they can safely be evaluated by Prolog.
  They also must not occur in the initial state formula at all.
  The update mechanism underlying ALPprolog can handle only ground effects.
  Hence, if auxiliary atoms are used in action preconditions, case selection formulas of effect axioms, 
  then it is the burden of the programmer to ensure that these predicates always evaluate to ground terms
  on those variables that also occur in the action's effects.
\end{itemize}
\end{definition}

On the one hand clearly every propositional Fluent Calculus domain can be transformed to this form.
On the other hand it is well known that in general
compiling away the quantifiers in a state formula can result in an exponential blow-up,
as can the conversion to conjunctive normal form.
We believe that the simplicity of reasoning with prime implicates outweighs this drawback.

Propositional action domains can still be non-deterministic.
For example, for an applicable action 
two different cases may be applicable at the same time.
The resulting state would then be determined only by the disjunction of the cases' effects.
What is more, it would be logically unsound to consider only the effects of one of the cases.
For the online control of agents in ALPprolog we stipulate that for an applicable action at most a single case applies,
greatly simplifying the update of the agent's state knowledge.

\begin{definition}[Deterministic Propositional Fluent Calculus]
\label{def:fcp}
A propositional Fluent Calculus domain is deterministic if the following holds\/:
Let $a$ be an applicable ground action.
Then there is at most one case of the action that is applicable in the given state.
\end{definition}

For example, an action theory is deterministic if for each effect axiom 
all the cases are mutually exclusive. Next assume we have an applicable deterministic action with e.g.\ two case selection 
formulas $\phi(s)$ and $\neg \phi(s)$,
where neither case is implied by the current state. Here, instead of updating the current state with the disjunction
of the respective effects, ALPprolog will employ incomplete reasoning.

\subsection{Propositional Fluent Calculus with Sensing}
\label{ssec:sens}

We make the following assumptions concerning sensing\/:
At any one time, a sensor may only return a single value from a fixed set $\mathcal{R}$ of ground terms, 
the {\em sensing results}. However, the meaning of such a sensing result may depend upon the concrete situation of
the agent. 

\continuedexample
\begin{example}
Assume that now one of the cells in the maze contains a deadly threat to our gold-hunting agent.
If the agent is next to a cell containing the threat she perceives a certain smell,
otherwise she doesn't\/: 
She can sense whether one of the neighbouring cells is unsafe;
but the actual neighbouring cells are only determined by the agent's current location.
\end{example}


\begin{definition}[Sensor Axiom]
\label{def:sens}
A sense fluent $S(x)$ is a unary fluent that serves as interface to the sensor. We assume the sort \sort{sensefluent} to be a subsort of sort \sort{fluent}.
A sensor axiom then is of the form
\[
(\forall s, x,\vec y) \Holds(S(x),s) \equiv \bigvee_{R \in \mathcal{R}} x = R \land \phi(x,\vec y,s) \land \psi(x, \vec y, s),
\]
for a ground set of sensing results $\mathcal{R}$.
Here $\phi(x, \vec y, s)$ is a prime state formula that selects a meaning of the sensing result $R$,
whereas the pure prime state formula $\psi(x, \vec y, s)$ describes the selected meaning.
We stipulate that sensor axioms (which are a form of domain constraint) may only be included if they are modular.
\end{definition}

Clearly $\phi(x, \vec y, s)$ should be chosen so as to be uniquely determined in each state.
If auxiliary axioms are used in $\phi(x, \vec y, s)$ then again the programmer must ensure that these
evaluate to ground terms in order that a ground state representation can be maintained.

\continuedexample
\begin{example}
The following is the sensor axiom for our gold-hunter\/:
\begin{align*}
(\forall) &\Holds(\text{\em PerceiveSmell}(x),s) \equiv \\
& x = \text{\em true}\land \Holds(\text{\em At}(\Agent,y),s) \land \text{\em Neighbours}(y,\vec z) \land \!\bigvee_{z\in\vec z}\!\Holds(\text{\em ThreatAt}(z),s)\\
& \lor \\
& x = \text{\em false}\land \Holds(\text{\em At}(\Agent,y),s) \land \text{\em Neighbours}(y,\vec z) \land \!\bigwedge_{z\in\vec z} \!\neg \Holds(\text{\em ThreatAt}(z),s))
\end{align*}
\end{example}

Theoretically, the combination of sensing with the online control of an agent is quite challenging\/:
It is logically sound to to consider the disjunction of all possible sensing results for offline reasoning.
In the online setting, however, upon the observation of a sensing result we henceforth have to accept this result as 
being true;
that is, at runtime we {\em add} the result to the action theory, something which is logically unsound.
On the other hand, it also does not make sense to stipulate that the sensing result be known beforehand.

\subsection{Action Theory Representation}

We continue by describing how the underlying action theory is represented in ALPprolog.
As basic building block we need a representation for prime state formulas.
For notational convenience we will represent $(\neg)\Holds(f,s)$ literals by the (possibly negated) fluent terms only,
and, by an abuse of terminology, we will call such a term $(\neg) f$ a fluent literal.
A convenient Prolog representation for such a state 
formula is a list, where
each element is either a literal (i.e.\ a unit clause) or 
a list of at least two literals (a non-unit clause).
In the following we call such a list a PI-list. 

\begin{definition}[Action Theory Representation]
Action theories as defined in definition~\ref{def:fcp} are represented in ALPprolog as follows\/:
\begin{itemize}
\item The initial state is specified by a Prolog fact \verb#initial_state(PI-List).#, 
  where \verb#PI-List# mentions only ground fluent literals. Domain constraints other than sensor axioms
  have to be compiled into \verb#PI-List#.
\item a Prolog fact \,\verb#action(A,Precond,EffAx).#, for each action $a$, has to be included, where
  \begin{itemize}
  \item \verb#A# is an action function symbol, possibly with object terms as arguments;
  \item \verb#Precond# is a PI-list, the action's precondition;
  \item \verb#EffAx# is a list of cases for the action's effects with each case being a pair \verb#Cond-Eff#, where
    the effect's condition \verb#Cond# is a PI-list, and the effects \verb#Eff# are a list of 
    fluent literals; and
  \item all variables in \verb#EffAx# also occur in \verb#Precond#.
  \end{itemize}

\item If present, auxiliary axioms \Auxs are represented by a set of Prolog clauses. 
  The predicates defined in the auxiliary axioms must be declared explicitly by a fact \verb#aux(Aux).#, 
  where \verb#Aux# denotes the listing of the respective predicate symbols.
\end{itemize}

The sensor axioms are represented as Prolog facts \verb#sensor_axiom(s(X),Vals).#, where

\begin{itemize}
\item \verb#s# is a sense fluent with object argument \verb#X#; and
\item \verb#Vals# is a list of \verb#Val-Index-Meaning# triples, where
\begin{itemize}
\item \verb#Val# is a pair \verb#X-result_i#, where \verb#result_i# is the observed sensing result;
\item \verb#Index# is a PI-list consisting of unit clauses; and
\item \verb#Meaning# is a PI-list, mentioning only fluent literals and only variables from \verb#Val# and \verb#Index#.
\end{itemize}
\end{itemize}

The sense fluents have to be declared explicitly by a fact \verb#sensors(Sensors).#, 
where \verb#Sensors# is a listing of the respective function symbols.
This is necessary in order to distinguish sense fluents, ordinary fluents, and auxiliary predicates in PI-lists.
\end{definition}

\subsection{Reasoning for ALPprolog}

Reasoning in ALPprolog works as follows\/:
For evaluating the program atoms we readily resort to Prolog.
The reasoner for the action theory 
is based on the principle of progression.
Setting out from the initial state,
upon each successful evaluation of an action's precondition against the current state description,
we update the current state description by the action's effects.

Reasoning about the action comes in the following forms\/:

\begin{itemize}
\item Given a ground applicable action $a$, from the current state description $\phi(s_1)$ and the 
  action's positive and negative effects
  compute the description of the next state $\psi(s_2)$ (the update problem).
\item Given a description $\phi(s)$ of the current state, 
  check whether $\{\phi(s)\} \cup \Aux \vDash \psi(s)$, where $\psi(s)$ is some state formula in $s$,
  but not a sense fluent (the entailment problem).
\item For a sensing action, i.e.\ a query $\Holds(S(x),s)$, integrate the sensing results observed
  into the agent's state knowledge (the sensing problem).
\end{itemize}

In the following we consider each of these
reasoning problems in turn.

\subsubsection{The Update Problem}

It turns out that solving the update problem is very simple. Let \verb#State# be a ground PI-List, and let \verb#Update#
be a list of ground fluents. The representation of the next state is then computed in two steps\/:

\begin{itemize}
\item[(1)] First, all prime implicates in \verb#State# that contain either an effect from \verb#Update#, 
  or its negation, are deleted, resulting in \verb#State1#.
\item[(2)] The next state \verb#NextState# is given by the union of \verb#State1# and \verb#Update#.
\end{itemize}

Starting from a ground initial state only ground states are computed.

The correctness of this procedure can be seen e.g.\ as follows\/: In \cite{LiLuMiWo-KR-06,update}
algorithms for computing updates in a Fluent Calculus based upon Description Logics have been developed.
The above update algorithm constitutes a special case of these algorithms.

\subsubsection{The Entailment Problem}

When evaluating a clause $\psi$ against
a ground prime state formula $\phi$,
$\psi$ is first split into the fluent part $\psi_1$, and the non-fluent part $\psi_2$.
It then holds that $\psi$ is entailed by $\phi$
if there is a ground substitution $\theta$ such that
\begin{itemize}
\item $\psi_1\theta$ is subsumed by some prime implicate in $\phi$; or
\item some auxiliary atom $P(\vec x)\theta$ from $\psi_2$ can be derived from its defining Prolog clauses.
\end{itemize}

Computing that the clause $\psi_1$ is subsumed by $\phi$ can be done as follows\/:
\begin{itemize}
\item If $\psi_1$ is a singleton, then it must be a prime implicate of $\phi$ (modulo unification).
\item Otherwise there must be a prime implicate in $\phi$ that contains $\psi_1$ (modulo unification).
\end{itemize}
Hence the entailment problem for ALPprolog can be solved by 
\verb#member#, \verb#memberchk#, and \verb#subset# operations on sorted, duplicate-free lists.

The following example illustrates how reasoning in ALPprolog can be reduced to simple operation on lists.
It also illustrates the limited form of reasoning about disjunctive information available in ALPprolog\/:
\newexample
\begin{example}
Assume that the current state is given by \verb#[[at(gold,4),at(gold,5)]]#.
Then the query \verb#?([at(gold,X)])# fails, 
because we don't consider disjunctive substitutions. 
However, on the same current state the query \verb#?([[at(gold,X),at(gold,Y)]])# succeeds with \verb#X=4# and \verb#Y=5#.
\end{example}

\subsubsection{The Sensing Problem}

Sensing results have to be included into the agent's state knowledge every time a sensing action is performed,
i.e.\ a literal \verb#?(s(X))# is evaluated. This works as follows\/:
\begin{itemize}
\item First we identify the appropriate sensor axiom \verb#sensor_axiom(s(X),Vals)#.
\item Next we identify all the \verb#[X-result_i]-Index-Meaning# triples in \verb#Vals# 
  such that \verb#result_i# matches the observed sensing result, and unify \verb#X# with \verb#result_i#.
\item We then locate the unique \verb#Index-Meaning# s.t.\ the current state entails \verb#Index#.
\item Finally, we adjoin \verb#Meaning# to the current state and transform this union to a PI-list.
\end{itemize}

\subsection{Soundness of ALPprolog}

At the end of section~\ref{ssec:sens}
we have already mentioned that adding sensing results to the action theory at runtime makes the subsequent reasoning
logically unsound wrt.\ the original program plus action theory.
If we add the set of sensing results observed throughout a run of an ALPprolog program, however,
then we can obtain the following soundness result\/:

\newproposition
\begin{proposition}
\label{prop:sound}
Let \Programs be a ALPprolog program on top of an action domain \DomAx. 
Let $\Sigma$ be the union of the sensor results observed during a successful derivation of the ALPprolog
query \Querys with computed answer substitution $\theta$. Then $\DomAx \cup \Program \cup \Sigma \vDash \Query\theta$.
\end{proposition}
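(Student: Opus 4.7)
The plan is to reduce soundness of ALPprolog to soundness of the underlying SLD-based ALP proof calculus, which has already been established in \cite{alp}. The key observation is that a successful ALPprolog derivation can be recast as an SLD derivation in the ALP framework evaluated against the expanded action theory $\DomAx \cup \Sigma$, provided that the three reasoning primitives — progression, entailment checking, and sensing integration — faithfully mirror logical consequence over $\DomAx \cup \Sigma$.

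I would proceed by induction on the length of the ALPprolog derivation, tracking the sequence of ground prime state formulas $\phi_0,\phi_1,\ldots,\phi_n$ maintained by the interpreter, together with the associated situations $S_0, S_1, \ldots, S_n$ obtained by successively applying the executed actions. The induction invariant is that $\DomAx \cup \Sigma \models \Holds(\phi_i, S_i)$, and that the partial answer substitution computed so far is entailed by $\DomAx \cup \Program \cup \Sigma$ together with the remaining goal. For the update step, I would appeal to the correctness of the progression algorithm established in \cite{update,LiLuMiWo-KR-06}: after verifying that the action is applicable in $\phi_i$ and selecting the unique applicable case, deleting prime implicates that mention the effect fluents and adjoining the effects yields a prime state formula $\phi_{i+1}$ such that $\DomAx \cup \{\Holds(\phi_i,S_i)\} \models \Holds(\phi_{i+1},S_{i+1})$. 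Modularity of domain constraints is essential here, since it guarantees that the updated state continues to satisfy $\Dom$ without further reasoning. For the entailment step, soundness is immediate from the characterisation of prime implicates: if $\psi_1\theta$ is subsumed by a prime implicate of $\phi_i$, then $\phi_i \models \psi_1\theta$, and the auxiliary part $\psi_2$ is discharged by soundness of Prolog for \Auxs, using the restriction that auxiliary predicates occur only positively outside of \Auxs.

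The main obstacle is the sensing step, precisely because, as the excerpt emphasises, adding a sensing result to the state at runtime is not logically sound with respect to $\DomAx$ alone. This is exactly what motivates the extra set $\Sigma$ in the statement. Here I would argue as follows: when \verb#?(s(X))# is evaluated and result $R$ is observed, the literal $\Holds(S(R),S_i)$ is added to $\Sigma$. Instantiating the sensor axiom of Definition~\ref{def:sens} with $x = R$ and the current situation $S_i$, the left-hand side then holds, so the right-hand side forces some disjunct $\phi(R,\vec y, S_i) \wedge \psi(R,\vec y, S_i)$ to hold. Since ALPprolog picks the unique index $\phi$ entailed by $\phi_i$, and the $\phi$'s were required to be pairwise exclusive in each state, the corresponding meaning $\psi$ is forced. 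Hence adjoining $\psi$ to $\phi_i$ preserves the invariant $\DomAx \cup \Sigma \models \Holds(\phi_{i+1},S_i)$.

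Finally, combining the maintained invariant with soundness of SLD-resolution against action theories that lack the witness property \cite{alp} — recall that ALPprolog adopts the leftmost computation rule, so state properties are always evaluated against the current $\phi_i$ — yields $\DomAx \cup \Program \cup \Sigma \models \Query\theta$ at the end of a successful derivation. The cuts inserted after action occurrences are not logically relevant; they only affect which derivations the interpreter considers, not the soundness of the ones it completes.
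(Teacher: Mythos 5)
Your proposal is correct and follows essentially the same route as the paper's (much terser) proof sketch: soundness of SLD-resolution for ordinary program atoms, entailment of successfully evaluated $\Holds$ queries from the action theory via the prime-implicate machinery and the progression results of \cite{LiLuMiWo-KR-06,update}, and treatment of sensing by adjoining the observed result to $\Sigma$ so that the selected meaning becomes a logical consequence. Your explicit induction on the derivation with the invariant $\DomAx\cup\Sigma\models\Holds(\phi_i,S_i)$ merely makes precise what the paper leaves implicit.
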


\begin{proof}[Proof (Sketch)]
It is well-known that SLD-resolution is sound for any ordinary program atom.
A query $(\exists)\Holds(\phi,s)\!$, where $\phi$ is not a sense fluent, is only evaluated successfully if there is a substitution 
$\theta$ such that $\DomAx \vDash (\forall)\Holds(\phi,s)\theta$. 
Assume we observe the sensing result $R_i\in\Rmc$ for a sense fluent $S(x,s)$. In general we have (cf.\ Definition~\ref{def:sens})\/:
\[
\DomAx \vDash (\forall)\Holds(S(x,s)) \land \bigvee_{R\in\Rmc}x = R, \text{ but } \DomAx \nvDash (\forall)\Holds(S(x,s)) \land x = R_i.
\]
For soundness, we have to add the observed sensing result as an additional assumption to the theory\/: 
$\DomAx\cup \{\Holds(S(R_i,s))\}\vDash (\forall)\Holds(S(x,s))\land x=R_i$.
\end{proof}

\section{Evaluation}
\label{sec:eval}

We have evaluated the performance of ALPprolog via the so-called
Wumpus World \cite{RussellNorvig03} that is a well-known challenge problem in the reasoning about action community. 
Essentially, the Wumpus World is an extended version of the gold-hunter domain from example~\ref{ex:alp}.
The main features that make it a good challenge problem are incomplete information in the form of disjunctions and unknown propositions, 
and reasoning about sensing results.

We have used both Flux and ALPprolog to solve Wumpus Worlds of size up to $32 \times 32\!$.\footnote{The distribution of ALPprolog contains the Wumpus World example for both ALPprolog and Flux.}
We have done this using three different modellings\/:

\begin{itemize} 
\item[(1)] In \cite{D:24} a Flux model is described that uses quantification over variables --- this is beyond ALPprolog.
\item[(2)] We have evaluated both languages on a ground model.
\item[(3)] We have artificially increased the size of the ground model by making the connections between cells part of the state knowledge.
\end{itemize}

A first observation is that both languages roughly scale equally well in all models.
Using (1) Flux is slightly faster than ALPprolog using (2).
Let us then point out that on ground models Flux and ALPprolog maintain the same state representation\/:
Flux also computes the prime implicates.
On the encoding (2) ALPprolog is roughly one order of magnitude faster than Flux,
whereas on (3) the difference is already two orders of magnitude.
The key to the good performance of ALPprolog then is that it handles large state representations well\/:
By encoding states as {\em sorted} lists (of lists) some of the search effort necessary in Flux can be avoided.
If, however, we use Flux' capability of handling quantified variables in the state knowledge for a more concise encoding,
then ALPprolog and Flux are again on par, with Flux even having slightly the edge.
In general, we expect ALPprolog to excel on problem domains that feature large state representations that are not easily compressed using quantification.

It has already been established that Flux gains continuously over standard Golog 
the more actions have to be performed \cite{A:16}. 
As ALPprolog scales as well as Flux
the same holds for ALPprolog and Golog.
The version of Golog with periodically progressed state knowledge is slightly 
slower than Flux \cite{sardin:wumpus}. 

Let us also compare ALPprolog, Flux, and Golog from a knowledge representation perspective\/:
Both ALPprolog and Flux allow the programmer to define new auxiliary predicates for the agent strategy
that are not present in the action theory, a practically very useful feature that is missing from Golog. Also, the propositional
variables used in Golog instead of the finitely many ground terms used in ALPprolog make it hard for the programmer to
fully exploit the power of Prolog's unification mechanism. In this regard Flux, on the other hand, excels in that
the programmer can include fluents containing (possibly quantified) variables in the agent's state knowledge.
Contrary to ALPprolog and Golog, however, Flux does not support arbitrary disjunctions.

\section{Conclusion and Future Work}
\label{sec:conc}

In this work we have presented ALPprolog, an efficient logic programming language for the online control
of autonomous agents in domains that feature incomplete information and sensing. 
On the one hand, it can be argued that the state-of-the-art languages Golog and Flux already successfully address this application domain.
On the other hand, we have shown that ALPprolog excels because of its efficient reasoning 
with large ground state representations, something that we expect to be quite useful in practice.

For future work, there are two interesting directions\/:
On the one hand it would be nice to extend ALPprolog to offline planning.
The disjunctive substitutions in the general ALP proof calculus
provide a powerful form of reasoning about conditional plans, or planning in the presence of sensing
in the sense of \cite{conf/aaai/Levesque96}.

On the other hand we plan to fruitfully apply ALPprolog in the domain of General Game Playing.
General Game Playing \cite{journals/aim/GeneserethLP05} is a new exciting AI research challenge aiming at the 
integration of manifold AI techniques\/: A program (also called a player)
is given an axiomatisation of the rules of a game. 
The player then computes a strategy/heuristic that it uses to play
and hopefully win the game.
The main challenge of General Game Playing consists of constructing suitable heuristics.

However, at its base the player also needs a means to represent, and reason about, the
state of the game.
Up to now the games played in General Game Playing have been restricted to complete 
information \cite{Love2008} --- but clearly games with incomplete information constitute a bigger challenge~\cite{B:60}.
We intend to include techniques from ALPprolog into the successful Flux-based Fluxplayer \cite{schiffel:fluxplayer}.

\paragraph{Acknowledgements.}
We appreciate the helpful comments by the reviewers. This work was partially supported by DFG Grant TH 541/14. C.\ Drescher wishes to acknowledge support by EPSRC Grant EP/G055114/1. M.\ Thielscher is the recipient of an Australian Research Council Future Fellowship (project number~FT\,0991348). He is also affiliated with the University of Western Sydney.

\bibliographystyle{acmtrans}

\end{document}